\newtheorem{theorem}{Theorem}
\theoremstyle{definition}
\newtheorem{lemma}[theorem]{Lemma}
\newtheorem{observation}[theorem]{Observation}
\renewcommand{\P}{\mathcal{P}}
\newcommand{\NP}{\mathcal{NP}}
\newcommand{\Z}{\mathbb{Z}}
\newcommand{\R}{\mathbb{R}}
\renewcommand{\O}{\mathcal{O}}
\title{Algorithms and Complexity for the Almost Equal Maximum Flow Problem}
\author{R. Haese \and T. Heller\footnote{Corresponding author, \texttt{<till.heller@itwm.fraunhofer.de>}.} \and S.O. Krumke}
\begin{document}
	\maketitle            
	\begin{abstract}
		In the Equal Maximum Flow Problem (EMFP), we aim for a maximum flow where we require the same flow value on all edges in some given subsets of the edge set. In this paper, we study the closely related Almost Equal Maximum Flow Problems (AEMFP) where the flow values on edges of one homologous edge set differ at most by the valuation of a so called deviation function~$\Delta$. We prove that the integer almost equal maximum flow problem (integer AEMFP) is in general $\mathcal{NP}$-complete, and show that even the problem of finding a fractional maximum flow in the case of convex deviation functions is also $\mathcal{NP}$-complete. This is in contrast to the EMFP, which is polynomial time solvable in the fractional case. We provide inapproximability results for the integral AEMFP. For the integer AEMFP we state a polynomial algorithm for the constant deviation and concave case for a fixed number of homologous sets.
	\end{abstract}
	
\section{Introduction}
	
The Maximum Flow Problem is a well studied problem in the area of network flow problems. Given a graph $G=(V,A)$ with non-negative edge capacities $u:A\mapsto \mathbb{R}$, a source $s\in V$, a sink $t\in V\setminus\{s\}$ one searches for a $s$-$t$-flow $f\colon A\mapsto \mathbb{R}_{\geq0}$ such that $0\leq f \leq u$ (capacity constraints), for all $v\neq s,t$ we have $f(\delta^+(v)) - f(\delta^-(v)) = 0$ (flow conservation) and such that the total amount of flow reaching the sink $\text{val}(f)\coloneqq f(\delta^-(t)) - f(\delta^+(t))$ is maximized. Like in standard notation from the literature, we denote by~$\delta^-(v)$ for a node~$v$ the set of ingoing edges, by~$\delta^+(v)$ the set of outgoing edges, and for $S\subseteq A$ abbreviate $f(S)\coloneqq \sum_{a\in S} f(a)$.

In this paper, we study a variant of the family of equal flow problems, which we call the Almost Equal Flow Problems (AEFP). In addition to the data for the Maximum Flow Problem one is given (not necessarily disjoint) homologous subsets $R_i\subseteq E$ for $i=1,\dotsc, k$, monotonically increasing functions~$\Delta_i$ and one requires for the flow~$f$ the \emph{homologous edge set condition} that $f(a)\in[f_i, \Delta_i(f_i)]$ for all $e \in R_i$, $i=1,\dotsc, k$, where $f_i:=\min_{e\in R_i} f(e)$ denotes the smallest flow value of an edge in~$R_i$. In the special case that all $\Delta_i$ are the identity, all edges in a homologous set are required to have the same flow value. This problem is known as the Equal Maximum Flow Problem (EMFP).

The EMFP and related problems have been studied for quite a time. Ali et al.~\cite{ali1988equal} considered a variant of the minimum cost flow problem, where $K$ pairs of edges are required to have the same flow value, which they called \textit{equal flow problem}. An integer version of this problem, where flow on the edges required to be integer, was also studied by Ali et al.~\cite{ali1988equal} and was shown to be $\NP$-complete. Further, they obtained a heuristic algorithm based on a Lagrangian relaxation technique. Meyer and Schulz~\cite{meyers2009integer} showed that the integer equal flow problem is not approximable in polynomial time (unless $\P=\NP$), even if the edge sets are of size two. Ahuja et al.~\cite{ahuja1999algorithms} considered the \textit{simple equal flow problem}, where the flow value on edges of a single subset of the edge set has to be equal. Using Megiddo's parametric search technique \cite{megiddo1978combinatorial,megiddo1981applying}, they present a strongly polynomial algorithm which has a running time of $\O(\{m(m+n\log n)\log n\}^2)$.

Here we provide the first complexity results for the AEMFP. Our complexity and approximation results for the AEMFP are covered in Table~\ref{table:results}, where the first three rows correspond to the variants of the AEMFP.          
\begin{table}[htbp!]
	\centering
	\begin{tabular}{lccll}
		\hline
		Function~$\Delta$ & fractional  & integer & fixed~$k$ & lower bound\\
		&&&& for approximation \\
		\hline
		AEMFP&&&&\\
		\hline
		const.\ deviation& $\P$ & $\NP$ & $\P$ & 2 - $\epsilon$ \\
		concave & $\P$ & $\NP$ & $\P$ & no constant \\ 
		convex & $\NP$ & $\NP$ & $\NP$ & no constant \\ 
		\hline
	\end{tabular}
	\vspace{3pt}
	\caption{Overview of the results for the AEMFP.}\label{table:results}
\end{table}

The columns two to four denote the complexity classes of the different problem variants while the entries of the fifth column contain an upper bound for the best approximation factor for a polynomial algorithm (unless $\P = \NP$). If a function~$\Delta$ is of the form $x\mapsto x+c$ for a fixed constant~$c\geq 0$ we call $\Delta$ a \emph{constant deviation function}. For the AEFMP with $k$~homologous edge sets and constant deviation functions, we obtain a running time of $\O(n^km^k\log(\log(n))^kT_{mf}(n,n+m))$ where $T_{mf}(n,m)$ denotes the running time of a maximum flow algorithm on a graph $G$ with $n$~nodes and $m$~edges. Note that general polynomial time solvability of the AEMFP in case of constant deviation functions also follows from Tardos' Algorithm, see e.g.~\cite{tardos1986strongly}. Our main algorithmic contribution is a combinatorial method which not only works in the constant deviation case but also for concave functions.

The rest of the paper is organized as follows. In Section~\ref{sec: aemfp prob def} we state the formal definition of the AEMFP. The main complexity and approximation results for the general case are provided in Section~\ref{sec: aemfp compl}. The case of a constant deviation function is discussed in Section~\ref{sec: aemfp const} where also the strongly polynomial algorithm based on the parametric search technique is presented. In Section~\ref{sec: aemfp variants} problem variants of the AEMFP are discussed, i.e. the cases of concave and convex deviation functions. We then conclude with a short outlook.

\section{Problem Definition}\label{sec: aemfp prob def}
In this section, we give a formal definition of the \emph{Almost Equal Maximum Flow Problem}. The AEMFP can be formulated as the following optimization problem in the variables $f_e$ $(e\in E)$:
\begin{align}
\textbf{(AEMFP)} \qquad \max \quad& f(\delta^+(s)) - f(\delta^-(s)) \\
\text{s.t.} \quad & f(\delta^+(s)) - f(\delta^-(s)) \geq 0 \\
& f(\delta^+(t)) - f(\delta^-(t)) \leq 0 \\
&  f(\delta^+(v)) - f(\delta^-(v)) = 0  &&\forall v\in V\backslash\{s,t\}\\
&  0 \leq f_r \leq u_r  &&\forall r\in E\\
& f_i \leq f_{r} \leq \Delta_i(f_i)  &&\forall r \in R_{\Delta_i}, \forall  R_{\Delta_i} \label{eq:2},
\end{align}
where $f_i$ denotes the minimum flow value on edges from $R_{\Delta_i}$. In the integral version, we additionally require $f$ to attain only integral values. Note that, in general the above problem is nonlinear due to the nonlinearity of the deviation functions~$\Delta_i$ and condition~\eqref{eq:2}. However, if each~$\Delta_i$ is a constant deviation, then~\eqref{eq:2} becomes $f_i\leq f(r_i)\leq f_i+c_i$ and the AEMFP can be formulated as a linear program.

The simple AEMFP is defined as the AEMFP with just one homologous edge set $R_{\Delta}$. Note that by subdividing edges that are contained in several homologous edge sets, we can assume without loss of generality that the homologous edge sets are disjoint.

\section{Complexity and Approximation}\label{sec: aemfp compl}
In this section, we provide complexity and approximation results for the constant deviation, concave and convex AEMFP.
\begin{theorem}\label{thm: integer constant deviation complexity}
	The integer AEMFP is $\NP$-complete, even if all deviation functions are the same constant deviation function, the homologous sets are disjoint, the capacities are integral, and the graph is bipartite.
\end{theorem}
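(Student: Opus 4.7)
Membership in $\NP$ is immediate: an integer flow is a polynomial-size certificate, and all of the capacity, conservation, and homologous edge set constraints can be checked in polynomial time.

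For $\NP$-hardness I would reduce from \textsc{Exact Cover by 3-Sets} (X3C). Given an instance with ground set $U=\{u_1,\dots,u_{3n}\}$ and a family $\mathcal{S}=\{S_1,\dots,S_N\}$ of $3$-element subsets of $U$, I construct a four-layer directed bipartite network: a source $s$, one node $p_j$ per set $S_j$, one node $q_i$ per element $u_i$, and a sink $t$. The arcs are $(s,p_j)$ with capacity $3$, $(p_j,q_i)$ with capacity $1$ for every incidence $u_i\in S_j$, and $(q_i,t)$ with capacity $1$. For each $j$ I introduce the homologous subset $R_j:=\{(p_j,q_i):u_i\in S_j\}$ and equip every $R_j$ with the same constant deviation function $\Delta(x)=x$ (i.e.\ $c=0$). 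All capacities are integers, the sets $R_j$ are pairwise disjoint, and the network is bipartite with classes $L=\{s\}\cup\{q_i\}$ and $R=\{p_j\}\cup\{t\}$, so every structural requirement of the theorem is satisfied by construction.

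The correctness argument runs as follows. In any integer feasible flow, the equality condition on $R_j$ combined with unit capacities and integrality forces a common value $y_j\in\{0,1\}$ on the three edges of $R_j$. Flow conservation at $p_j$ then gives $f(s,p_j)=3y_j\le 3$, and conservation at $q_i$ yields $f(q_i,t)=\sum_{j:\,u_i\in S_j} y_j$, which the unit capacity of $(q_i,t)$ bounds by $1$. Hence the value of the flow equals $3\sum_j y_j$ and lies in $[0,3n]$, with the upper bound attained if and only if $\{S_j:y_j=1\}$ is an exact cover of $U$. Therefore the X3C instance is a yes-instance iff the constructed AEMFP admits an integer flow of value $3n$, and since X3C is $\NP$-complete the theorem follows.

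The only delicate step, I expect, is the encoding argument itself: showing that on a bundle of unit-capacity edges tied together by the $c=0$ homologous condition, integrality collapses the continuum of feasible flows to a single binary activation variable per set, which is precisely what lets the flow network simulate the exact-cover decision. The remainder, verifying that the reduction is polynomial and that the produced network is bipartite with integer capacities, pairwise disjoint homologous sets, and a single common constant deviation function, is routine bookkeeping.
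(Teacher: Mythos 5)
Your reduction is internally consistent, but it establishes a weaker fact than the one the paper is after. By choosing $\Delta(x)=x$, i.e.\ the constant deviation $c=0$, you collapse the AEMFP to the ordinary integer \emph{equal} flow problem, whose $\NP$-completeness is already known (Ali et al.) and is cited in the introduction as prior work. The literal wording of the theorem ($c\geq 0$) arguably admits this degenerate case, but the intended content --- and the construction that the inapproximability result of Theorem~\ref{thm: integer constant deviation no 2 approx} explicitly reuses --- is that hardness persists for a \emph{strictly positive} deviation, specifically $\Delta\colon x\mapsto x+1$. Your correctness argument depends essentially on $c=0$: equality plus unit capacities plus integrality is what collapses each bundle $R_j$ to a single binary activation variable $y_j$. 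As soon as $c\geq 1$, one edge of $R_j$ may carry $0$ while another carries $1$, the bundle no longer behaves as one variable, and the argument that the flow value equals $3\sum_j y_j$ with maximum $3n$ iff an exact cover exists breaks down entirely. So the reduction does not extend to the regime the theorem is actually about.

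For comparison, the paper's proof handles the slack with an extra gadget: each homologous set $R_i$ contains, besides the three incidence edges $(S_i,a_j)$ of capacity $1$, a \emph{bonus edge} $(S_i,t)$ of capacity $2$, all governed by $\Delta\colon x\mapsto x+1$. The bonus edge can carry $2$ units only if the minimum flow value in $R_i$ is at least $1$, i.e.\ only if all three incidence edges are saturated; this rewards selecting $S_i$ as a cover set and makes the target value $\frac{7q}{3}$ attainable exactly when an exact cover exists. If you want your reduction to prove the theorem in its intended generality, you need an analogous mechanism that couples a ``reward'' edge to the saturation of an entire bundle, rather than relying on the equality constraint alone.
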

\begin{proof}
	We prove this by a reduction from \emph{Exact-3-Set-Cover} (X3C). Given an instance of X3C, we construct a graph~$G$ in the following way. For each of the $q$ sets~$S_i\in\mathcal{S}$ we add a node~$S_i$ and for each of the $q$ elements~$a_j\in\mathcal{A}$, we add a node~$a_j$ to $G$. Further, we add a source node~$s$ and a sink node~$t$. We add edges $(s,S_i)$ for $i=1,\dots,k$ with capacity~5, edges $(S_i,t)$ for $i=1,\dots,k$ with capacity~2, edges~$(a_j,t)$ for $j=1,\dots, q$ with capacity~1 and edges between $S_i$ and $a_j$ if $a_j$ is contained in $S_i$ with capacity~1. The edges of the form~$(S_i, t)$ are referred to as \emph{bonus edges}. We define homologous edge sets~$R_i$ as $\{(S_i,a_j): a_j\in S_i\} \cup \{(S_i,t)\}$ for $i=1,\dots,q$ and $R_0 \coloneqq \{(a_j,t): j=1,\dots,q\}$ where all these sets have the same constant deviation function~$\Delta: x\mapsto x+1$. 
	
	Now we want to show that X3C has a solution if and only if there is an integer almost equal maximum $s$-$t$-flow in $G$ with value $\frac{7q}{3}$.	Assume first that X3C has a solution~$S'$. Then, we define an integer almost equal flow as follows:
	\begin{enumerate}[-]
		\item $f(s,S_i) = \begin{cases} 5, & \text{ if } S_i\in S' \\
		1, & \text{ else. }
		\end{cases}$ \\
		\item $f(S_i,a_j) = \begin{cases} 1, & \text{ if } S_i\in S' \text{ and } a_j\in S_i\\
		0, & \text{ else. }
		\end{cases}$ \\
		\item $f(S_i,t) = \begin{cases} 2, & \text{ if } S_i\in S' \\
		1, & \text{ else. }
		\end{cases}$ \\
		\item $f(a_j,t) = 1, \text{ for } i=1,\dots,q$.
	\end{enumerate}
	By definition of the flow~$f$, the homologous edge set constraints, the flow conservation and capacity constraints are fulfilled. Hence, $f$ is an integer almost equal $s$-$t$-flow with flow value~$\frac{7q}{3}$. Assume there is a flow~$f'$ which has greater value than $f$. Due to capacity constraints, this flow must send at least one more unit of flow along an edge of the form $(S_i,t)$ with $S_i\notin S'$. Thus, at least one of the constraints~(\ref{eq:2}) is violated. This is a contradiction to $f'$ being feasible and, hence, $f$ is maximal.
	
	Conversely, assume that the almost equal maximum flow in~$G$ has flow value~$\frac{7q}{3}$. Due to constraint~(\ref{eq:2}), only using bonus edges yield in a flow with value~$q$. Since all flow must be integral and flow preservation holds, we know that $f(S_i,a_j) \in\{0,1\}$ for $i,j=1,\dots,q$. For a fixed node~$S_i$ we distinguish two cases: 
	\begin{enumerate}[-]
		\item If $f(S_i,a_j)=1$ for three such edges, then the bonus edge $(S_i,t)$ carries~$\{0,1,2\}$ units of flow, or
		\item if $f(S_i,a_j)=0$ for at least one of the three edges~$(S_i,a_j)$, then the flow value~$f(S_i,t)$ lies in $\{0,1\}$.
	\end{enumerate}
	Suppose $f(a_j,t)=0$ for at least one edge. By flow preservation, also $f(S_i,a_j)=0$ and with (\ref{eq:2}), we get $f(S_i,t)\leq 1$. With the considerations above, we get an upper bound on the maximum flow value. From edges of the form $(a_j,t)$ we get at most $q-1$ units of flow in total, while at most $\frac{q-3}{3}$ bonus edges can carry $2$ units of flow and $q-\frac{q-3}{3}$ bonus edges carry $1$ unit of flow. Hence, we obtain a maximum flow value of $\frac{7q}{3}-2$.
	
	In order to get the desired flow value of $\frac{7q}{3}$, we need $f(a_j,t)=1$ for all $j=1,\dots,q$. Thus, each $a_j$ receives one unit of flow from some set node~$S_i$. Further, we need $f(S_i,t)=2$ for at least $\frac{q}{3}$ edges. We denote the corresponding indices as $i_1,\dots,i_{\frac{q}{3}}$. This can only happen is for each of these $i_l$ case 1 is true. Now consider $V_l\coloneqq\{a_j : f(S_{i_l}, a_j)=1\}$, $l=1,\dots,\frac{q}{3}$. All these sets are subsets of $\mathcal{A}$ and are pairwise disjoint. Since $|V_l| = 3$, $l=1,\dots,\frac{q}{3}$, we get that \begin{align}\cup_{l=1}^{q/3} V_l = \mathcal{A}.\end{align}
	Hence, $f(S_i,a_j)=0$ for all other $i$. Choosing $S'\coloneqq\{S_{i_l}: l=1,\dots,\frac{q}{3}\}$ gives a X3C solution since each $a_j$ appears exactly once in it. This settles the claim. 
\end{proof}
\begin{theorem}\label{thm: integer constant deviation no 2 approx}     
	Unless $\P=\NP$, for any $\epsilon>0$, there is no polynomial time $(2-\epsilon)$-approximation algorithm for the integer AEMFP, even if we consider disjoint sets and a constant deviation $x\mapsto x+ 1$.
\end{theorem}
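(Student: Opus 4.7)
The plan is to establish Theorem~\ref{thm: integer constant deviation no 2 approx} via a gap-preserving reduction from X3C to the integer AEMFP with constant deviation $x \mapsto x + 1$. Given any $\epsilon > 0$, I would construct an AEMFP instance whose optimal flow value is at least $V$ in the X3C-YES case and at most $V/(2-\epsilon)$ in the X3C-NO case. Since X3C is $\NP$-complete, the existence of a polynomial-time $(2-\epsilon)$-approximation would then decide X3C by comparing the approximate flow value to a threshold near $V/(2-\epsilon)$, contradicting $\P \neq \NP$.

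The starting point is the construction from the proof of Theorem~\ref{thm: integer constant deviation complexity}. That reduction yields only an additive gap of $2$ (between $7q/3$ and $7q/3 - 2$), whose ratio tends to $1$ as $q$ grows, and so is not strong enough on its own. The first step of amplification is to replace the single bonus edge $(S_i, t)$ of capacity $2$ by $N$ parallel bonus edges of capacity $2$, all placed in the homologous set $R_i$ together with the capacity-$1$ edges $(S_i, a_j)$. Under the deviation $x \mapsto x+1$, each bonus edge carries flow up to $2$ when $S_i$ is \emph{activated} (meaning the homologous set minimum $f_i$ equals $1$, which forces each $(S_i, a_j)$ to have flow $1$, so $S_i$ covers its three elements), and at most $1$ otherwise. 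Combined with the $(a_j, t)$ capacity constraint $\sum_i f(S_i, a_j) \leq 1$, activated sets form a disjoint cover of at most $q/3$ elements, with equality to $q$ if and only if the X3C instance is a YES-instance; this preserves the X3C reduction structure while amplifying the per-set contribution from $2$ to $2N$ in the activated case.

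The main obstacle is that, without further modification, each inactive set still contributes an $N$-unit baseline bonus flow (since $f_i = 0$ permits each bonus edge to carry flow $1$), and this baseline dominates the gap so that the overall ratio remains bounded away from $2$. To overcome this, my plan is to additionally include the source-side edges $(s, S_i)$ in the homologous set $R_i$ as multiple parallel edges of capacity $2$, so that under $f_i = 0$ the allowed inflow collapses to $O(1)$, suppressing the total contribution of an inactive set proportionally. The delicate point of the construction is to balance these parameters: the number and capacity of the source-side edges in $R_i$ must be large enough that the activated configuration remains feasible (with inflow of order $2N$ matching the $2N$-unit bonus outflow plus the three filler units), yet small enough to genuinely suppress the inactive contribution. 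Once this balance is achieved, a routine calculation shows the YES optimum grows like $\tfrac{2Nq}{3}$ while the NO optimum is bounded by roughly $\tfrac{Nq}{3}$ plus lower-order terms, yielding a ratio tending to $2$ as $N \to \infty$; this is the main technical step and its verification completes the proof.
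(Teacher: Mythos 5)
There is a genuine gap, and it lies exactly at the step you defer to a ``routine calculation.'' Your amplification is \emph{local}: each activated set contributes $\approx 2N$ and each inactive set contributes some smaller amount. But in a NO-instance of X3C there is in general still a large pairwise disjoint subfamily --- up to $\tfrac{q}{3}-1$ sets can be activated, just not $\tfrac{q}{3}$ of them. So even in the best case for your construction (inactive sets contributing essentially nothing), the YES optimum is $\approx \tfrac{q}{3}\cdot 2N$ while the NO optimum is at least $\approx \left(\tfrac{q}{3}-1\right)\cdot 2N$; the two differ by only \emph{one} activated set's worth of flow, $O(N)$, against a baseline of $\Theta(Nq)$. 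The resulting ratio is $1+O(1/q)$, which tends to $1$ as $q$ grows --- and $q$ is dictated by the X3C instance, so it cannot be kept small. Your claimed bound of ``roughly $\tfrac{Nq}{3}$'' for the NO case is therefore not achievable by any parameter balancing: no per-set gadget can turn X3C's one-set additive gap into a constant multiplicative gap. (The source-side suppression idea also does not do what you want: to make the activated configuration feasible you need inflow capacity $\geq 2N+3$, i.e.\ about $N$ parallel capacity-$2$ edges in $R_i$, and with $f_i=0$ these still admit about $N$ units of inflow, so the inactive contribution is not $O(1)$ anyway.)

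The paper avoids this by using a single \emph{global} threshold gadget rather than per-set amplification. It appends to the Theorem~\ref{thm: integer constant deviation complexity} instance two nodes $t',t''$, an edge $(t,t')$ of capacity $\tfrac{7q}{3}$, then $\tfrac{7q}{3}$ parallel unit-capacity edges $(t',t'')$ and $k$ parallel capacity-$2$ edges $(s,t'')$, all in one homologous set with deviation $x\mapsto x+1$, with $t''$ the new sink. The unit edges are \emph{all} saturated iff the original network carries its full value $\tfrac{7q}{3}$, i.e.\ iff X3C is a YES-instance; by integrality, in the NO case at least one of them carries $0$, so the minimum of the amplifier set drops to $0$ and each of the $k$ capacity-$2$ edges carries at most $1$ instead of $2$. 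This makes the \emph{entire} amplified mass ($2k$ versus $k$) conditional on the global success of the cover, and letting $k\to\infty$ drives the ratio to $2$ independently of $q$. If you want to repair your proof, you need a gadget of this conditional-on-full-saturation type, not a replication of the per-set bonus edges.
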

\begin{proof}
	We extend the instance of the proof of Theorem~\ref{thm: integer constant deviation complexity} by adding two additional nodes~$t'$, $t''$. Further, we add one edge $(t,t')$ with capacity $\frac{7q}{3}$, $\frac{7q}{3}$ parallel edges $(t',t'')$ with capacity~$1$ and $k$ parallel edges $(s,t'')$ with capacity $2$, which we refer to as \emph{bonus edges}.
	
	For the edges~$(t',t'')$ and $(s,t'')$ a homologous edge set~$R_b$ with $\Delta_b = 1$ is added. The node~$t''$ is the new sink, i.e. we are asking for a $s$-$t''$-flow.
	
	If there exists a solution of X3C, then flow value is equal to $\frac{7q}{3}$, as proven before, and all of the edges $(t,t')$ can be fully saturated. This means, on every bonus edge $(s,t'')$ two units of flow can be send. Overall, this yields in a flow value of
	\begin{align}
	val(f_{yes}) = \frac{7q}{3} + k\cdot\frac{7q}{3}\cdot 2.
	\end{align}
	Now assume that there exists no solution of X3C. Then the maximum flow value is at most $\frac{7q}{3}-1$. Hence, at least one of the edges $(t',t'')$ carries no flow. But since all of the bonus edges are in the same homologous set together with the parallel edges $(t',t'')$, each bonus edge can carry at most one unit of flow. Again, overall we get a flow value of
	\begin{align}
	val(f_{no}) \leq \frac{7q}{3} - 1 + k\cdot \frac{7q}{3}.
	\end{align}
	Thus, for $k\mapsto \infty$, the approximation factor goes to $2$.
\end{proof}	
\section{The Constant Deviation Case}\label{sec: aemfp const}
We start with the simple AEMFP. Let $G=(V,E)$ be a graph with a single homologous edge set~$R$ and constant deviation function $\Delta_R\colon x\mapsto x+ c$. For easier notation, we define $Q:=E\backslash R$ as the set of all edges that are not contained in the homologous edge set~$R$. By the homologous edge set condition~\eqref{eq:2}, we know that the flow value on each of the corresponding edges must lie in an interval $[\lambda^*, \Delta(\lambda^*)] = [\lambda^*, \lambda^* + c]$, where $\lambda^*$ is unknown. For a guess value~$\lambda$ consider the modified network~$G_\lambda$, where we set the upper capacity of every edge in~$R$ to $\lambda+c$ and its lower capacity from~$0$ to~$\lambda$. All edges 
in~$Q$ keep their upper capacities and have lower capacity of~$0$. By $f_\lambda$ we denote a traditional $s$-$t$-flow which is feasible in~$G_\lambda$. 

For an $(s,t)$-cut $(S,T)$ let us denote by
\begin{align*}
g_S(\lambda):=u(\delta^+(S\cap Q)) + \sum_{r\in\delta^+(S\cap R)} \min\{u(r),
\Delta_R(\lambda)\} - \sum_{r\in\delta^-(S\cap R)} \lambda
\end{align*}
its capacity in~$G_\lambda$. By the \emph{Max-Flow
	Min-Cut Theorem} we get
\begin{align*}
\max_{f_{\lambda}} \text{val}(f_{\lambda}) &=
\min_{\text{$(S,T)$ is a $(s,t)$-cut}} g_S(\lambda)
\end{align*}
We summarize some structural results in the following observation.

\begin{observation}\label{strucutralresults}
	The following statements are true.
	\begin{enumerate}[i)]
		\item The function 
		\begin{align*}
		F(\lambda) \coloneqq \min_{\text{$(S,T)$ is a $(s,t)$-cut}} g_S(\lambda) 
		\end{align*}
		is a piecewise linear concave function.
		\item AEMFP can be solved by solving
		\begin{align*}
		\max \left\{ \,F(\lambda): 0\leq \lambda\leq \min_{r\in
			R_\Delta} u(r)\,\right\}.
		\end{align*}
		\item The function~$F(\lambda)$ has at most $2m$
		breakpoints.
		\item The minimum distance between two of these breakpoints is $\frac{1}{m^2}$.
	\end{enumerate}
\end{observation}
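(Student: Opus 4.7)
The plan is to exploit the explicit decomposition of each cut-capacity function $g_S(\lambda)$ into a constant, a linear term, and a sum of $\min$ terms. For (i), each summand $\min\{u(r),\lambda+c\}$ is a one-breakpoint concave piecewise-linear function of $\lambda$ (breakpoint at $\lambda=u(r)-c$), so $g_S$ is a finite sum of concave piecewise-linear functions and hence itself concave and piecewise linear. Then $F=\min_S g_S$ inherits both properties from the standard closure of concave piecewise-linear functions under finite pointwise minima.

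For (ii), the step is to identify AEMFP-feasibility with $G_\lambda$-feasibility for some $\lambda\in[0,\min_{r\in R}u(r)]$. Any AEMFP-feasible flow $f$ is $G_{\lambda^*}$-feasible for $\lambda^*:=\min_{r\in R}f(r)$, and conversely any $G_\lambda$-feasible flow automatically satisfies the homologous edge set condition, since $\lambda\le f(r)\le\lambda+c$ on $R$ forces $f(r)\le\min_{r'\in R}f(r')+c$. Combining this with max-flow min-cut in $G_\lambda$ reduces the AEMFP to the maximization of $F(\lambda)$ on the stated interval.

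For (iii), I would note that on any affine piece, the slope of $g_S$ equals $|\{r\in\delta^+(S\cap R):\lambda+c<u(r)\}|-|\delta^-(S\cap R)|$, an integer in $[-m,m]$. Since $F$ coincides locally with some $g_S$, its affine-piece slopes are integers in $[-m,m]$, and concavity forces these to strictly decrease across breakpoints. Writing $d_j\ge1$ for the slope drop at the $j$-th breakpoint, the telescoping sum $\sum_j d_j\le 2m$ immediately bounds the number of breakpoints by $2m$.

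The main obstacle is (iv), whose arithmetic argument tacitly requires integral capacities and an integral $c$. The plan is to show that at a breakpoint $\lambda_j$ the two adjacent affine pieces have integer slopes $b_{j-1}>b_j$ and integer intercepts $a_{j-1},a_j$, so $a_{j-1}+b_{j-1}\lambda_j=a_j+b_j\lambda_j$ yields $\lambda_j=(a_j-a_{j-1})/d_j$ with $d_j=b_{j-1}-b_j$. Two consecutive breakpoints therefore differ by at least $1/(d_j d_{j+1})$, and since $d_j+d_{j+1}\le\sum_\ell d_\ell\le 2m$, the AM-GM inequality gives $d_j d_{j+1}\le m^2$ and the desired lower bound $1/m^2$. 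The delicate step is verifying that the intercepts $a_j$ are integers; this follows by unfolding the formula for $g_S$ on a single affine piece and using integrality of the input, a check I would perform carefully.
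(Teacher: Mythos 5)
Your proposal is correct and follows the same basic route as the paper: express the parametric cut capacity via the functions $g_S(\lambda)$, take the pointwise minimum over cuts, invoke Max-Flow Min-Cut for (ii), and use integrality of slopes and of the input data to bound the number and the separation of breakpoints. Where you differ is mainly in rigor, and in each case your version is the more careful one. For (i) you correctly treat the terms $\min\{u(r),\lambda+c\}$ as one-breakpoint concave pieces, so each $g_S$ is itself only piecewise linear; the paper calls $F$ a ``minimum of linear functions,'' which glosses over these kinks. For (iii) the paper argues that a breakpoint occurs when the minimizing cut changes $d_R(S)$ and simply asserts this happens at most $2m$ times; your telescoping argument on the integer slope drops (slopes lie in $[-m,m]$ and strictly decrease by at least $1$ at each breakpoint of a concave function) makes that count precise and also covers breakpoints caused by a kink of a single $g_S$ rather than by a change of the minimizing cut, a case the paper silently omits. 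For (iv) the paper equates $g_S(\lambda)=g_{S'}(\lambda)$ and bounds numerator and denominator separately, ending with a somewhat muddled chain of inequalities that produces $\tfrac{1}{m(m-1)}$; your version (consecutive breakpoints are rationals with denominators $d_j,d_{j+1}$ whose product is at most $m^2$ by AM--GM since $\sum_\ell d_\ell\le 2m$) is cleaner and yields exactly the claimed $1/m^2$. You are also right to flag that integrality of the capacities and of $c$ is needed for the intercepts to be integers; the paper uses this tacitly. I see no gaps.
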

\begin{proof}
	\begin{enumerate}[i)]
		\item The function~$F(\lambda)$ is the minimum of linear functions in $\lambda$ and, hence, a concave linear function in $\lambda$.
		\item This is a direct consequence of the Max-Flow Min-Cut Theorem.  
		\item Let $d_R(S) := |\delta^+(S\cap R)| - |\delta^-(S\cap R)|$ denote the number of outgoing and ingoing edges of $R$ in the cut $(S,T)$. A breakpoint of $F(\lambda)$ occurs whenever the cut $(S,T)$ changes in a way that changes $d_R(S)$. As $F(\lambda)$ is concave and $d_R(S)$ counts edges, this can happen at most $2m$ times.
		\item  At a breakpoint, we have 
		\begin{align*}
		u(\delta^+(S\cap Q)) - l(\delta^-(S\cap Q)) + \lambda d_R(S) = u(\delta^+(S'\cap Q)) - l(\delta^-(S'\cap Q)) + \lambda d_R(S')
		\end{align*}
		for two cuts $(S,T)$, $(S',T')$. This gives an expression for $\lambda$ as
		\begin{align*}
		\lambda = \frac{(u(\delta^+(S\cap Q)) - l(\delta^-(S\cap Q)))-(u(\delta^+(S'\cap Q)) - l(\delta^-(S'\cap Q)))}{d_R(S') - d_R(S)}.
		\end{align*}
		Note that the denominator is not zero since $d_R(S)\neq d_R(S')$ by definition of a breakpoint. Therefore, the expression for $\lambda$ is well-defined. Further, we also know $(u(\delta^+(S\cap Q)) - l(\delta^-(S\cap Q)))\neq (u(\delta^+(S'\cap Q)) - l(\delta^-(S'\cap Q)))$. 
		By denoting $U:= \max\{u(r): r\in A\}$ and $L:=\min\{l(r): r\in A\}$, we get
		\begin{align*}
		&\hspace{4pt} (u(\delta^+(S\cap Q)) - l(\delta^-(S\cap Q))) - (u(\delta^+(S'\cap Q)) - l(\delta^-(S'\cap Q)))\\
		=& \hspace{4pt}(u(\delta^+(S\cap Q)) - u(\delta^+(S'\cap Q))) - (l(\delta^-(S\cap Q)) + l(\delta^-(S'\cap Q))) \\
		\leq&\hspace{4pt} mU - mL \\
		=& \hspace{4pt}m(U-L)\\
		\leq& \hspace{4pt}mU
		\end{align*}
		W.l.o.g. we can rearrange the two cuts such that both nominator and denominator are positive, i.e. the nominator lies in $\{1,\dots,mU-mL\}$ and the denominator lies in $\{1,\dots,|R|\}$ since it just counts the edges. Thus, we get for the breakpoint $\lambda$:
		\begin{align*}
		\frac{1}{m} \leq \frac{1}{|R|}\leq \lambda \leq \frac{m(U-L)}{1}
		\end{align*}
		Hence, the smallest distance between two breakpoints is $\frac{1}{m(m-1)} > \frac{1}{m^2}$.
	\end{enumerate}
\end{proof}

Observe that the optimal value $\lambda^*$ is attained at a breakpoint of~$F$. At this point the slope to the left is positive or the slope to the right is negative. If there exists a cut such that the slope is~$0$, we simply take the breakpoint to the left or right of the current value~$\lambda$.

Now we apply the parametric search technique by Megiddo~\cite{megiddo1978combinatorial,megiddo1981applying} to search for the optimal value $\lambda^*$ on the interval~$[0, u_R]$, where $u_R:=\min_{r\in R_\Delta} u(r)$ denotes the minimum upper bound of edges in~$R$. We simulate an appropriate maximum flow algorithm, e.g. the Edmonds-Karp algorithm, for symbolic lower capacities~$\lambda^*$ and upper capacities~$\lambda^* + c$ on the edges in~$R$.

\begin{observation}
	If we run the Edmonds-Karp algorithm, see \cite{edmonds1972theoretical}, to compute a maximum flow with a symbolic input parameter~$\lambda$, all flow values and residual capacities which are calculated during the algorithm steps are of the form~$a + b\lambda$ for $a,b\in\Z$.
\end{observation}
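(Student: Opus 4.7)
The plan is to establish the invariant by induction on the iterations of the (symbolically executed) Edmonds--Karp algorithm, namely that at every point of the simulation, each flow value $f_e$ and each residual capacity $r_e$ can be written as $a+b\lambda$ with $a,b\in\Z$. Before the induction can even start, I would bring the network into a form that Edmonds--Karp can handle directly, since the mandatory lower bound~$\lambda$ on edges in~$R$ forbids the zero flow as an initial solution. The standard device suffices: replace every $r\in R$ (lower bound~$\lambda$, upper bound~$\lambda+c$) by a forced flow of~$\lambda$ routed via an auxiliary super-source/super-sink construction together with a residual arc of capacity~$c$. After this transformation all arc capacities, together with the initial feasible flow, are already of the form $a+b\lambda$ with $a,b\in\Z$, so the base of the induction is immediate.

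For the inductive step, assume all current flow values satisfy $f_e=\alpha_e+\beta_e\lambda$ with $\alpha_e,\beta_e\in\Z$. A case distinction on whether the underlying arc lies in~$Q$ or in~$R$ shows that both forward residual capacities $u_e-f_e$ and backward residual capacities $f_e-l_e$ inherit the same form, since subtracting two expressions of type $a+b\lambda$ with integer coefficients again yields such an expression. In the next Edmonds--Karp iteration, the BFS only needs to test whether $r_e>0$, a comparison of the type $a+b\lambda>0$ which the parametric search framework of Megiddo resolves via its oracle. Once the shortest augmenting path~$P$ and its bottleneck arc~$e^*$ have been fixed in this way, the pushed amount $\delta=r_{e^*}=a^*+b^*\lambda$ is itself integer-affine in~$\lambda$, and the updates $f_e\leftarrow f_e\pm\delta$ along~$P$ preserve the invariant.

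The only genuinely subtle point is not the algebra of the updates, which is routine, but the observation that \emph{every} decision the algorithm makes reduces to a comparison between two integer-affine functions of~$\lambda$: the $>0$ tests used by BFS to build the residual graph and the $\min$ used to identify the bottleneck. Since Edmonds--Karp performs no arithmetic beyond additions, subtractions and such comparisons of residual quantities, the integer-affine invariant propagates through every iteration unchanged, which yields the claim.
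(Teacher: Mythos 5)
Your proof is correct and follows essentially the same route as the paper's: an induction over the iterations showing that the initial flow and capacities are integer-affine in $\lambda$ and that augmentations only add or subtract such expressions, hence preserve the form $a+b\lambda$. The extra care you take with the lower-bound transformation and with noting that all decisions reduce to comparisons of integer-affine functions is consistent with what the paper does elsewhere (in Lemma~\ref{lem: running time simple constant deviation}) and does not change the argument.
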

\begin{proof}
	At the start of our algorithm, all flow values are zero. The residual capacities are either integer or of the form $b\lambda$ for some $b\in\Z$, thus can be written as $a+b\lambda$. Whenever we augment flow along a path, we add two values of the form $a+b\lambda$, resulting in a new value of the same form.
\end{proof}

\begin{algorithm}
	\caption{Symbolic Edmonds-Karp}\label{alg: symbolic edmonds karp}
	Input: A graph~$G=(V,E)$, a source~$s$ and a sink~$t$, capacities~$c_{ij}$ for all $(ij)\in E$. \\
	Initialization: Set $f_{ij}\leftarrow 0$ for all $(ij)\in E$. \\
	\While{there exist a path~$p$ in $G_f$}{
		Choose shortest path in $G_f$ w.r.t. the number of edges.\\
		Compute $\Delta\coloneqq\min_{(ij)\in p} c^f_{ij}$ by using Algorithm~\ref{alg: solve comparison} for solving symbolic comparisons.\\
		\ForEach{$e=(i,j)\in p$}{
			$f_{ij}\leftarrow f_{ij} + \Delta$ \\
			$f_{ji} \leftarrow f_{ji} - \Delta$}}
	\Return $f$.	
\end{algorithm}
\begin{algorithm}
	\caption{Solve Comparison}\label{alg: solve comparison}
	Input: A graph~$G$, lower and upper capacity functions~$l,u$, a homologous edge set~$R$ and a test value~$\lambda$\\
	Initialization: Set $\lambda_1 \coloneqq \lambda - \frac{1}{2m^2}$, $\lambda_2 \coloneqq \lambda + \frac{1}{2m^2}$ \\
	\For{$i=1,2$}{
		Compute a maximum flow in $G_{\lambda_i}$. \\
		Compute $d(S_i)$ for the corresponding cuts~$S_i$.}
	\If{$d(S_1) > 0$ and $d(S_2) >0$}{\Return False}
	\Else{
		\If{$d(S_1) < 0$ and $d(S_2) <0$}{\Return True}
		\Else{Compute $\lambda^*$ as the intersection of $g_{S_1}(\lambda)$ and $g_{S_2}(\lambda)$. \\
			\Return $\lambda^*$}
		
	}		
\end{algorithm}

\begin{lemma}\label{lem: running time simple constant deviation}
	Algorithm~\ref{alg: symbolic edmonds karp} computes an almost equal maximum flow in time~$\O(n^3m\cdot T_{MF}(n,n+m))$, where $T_{MF}(n,n+m)$ denotes the time needed to compute a maximum flow on a graph with $n$~nodes and $n+m$~edges. 
\end{lemma}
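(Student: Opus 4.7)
The plan is to decompose the running time of Algorithm~\ref{alg: symbolic edmonds karp} into three factors: the number of iterations of the outer while loop, the number of symbolic comparisons performed per iteration, and the cost of a single such comparison. For the first factor I will invoke the classical analysis of Edmonds and Karp, which bounds the number of shortest-augmenting-path phases by $\O(nm)$ on a graph with $n$ nodes and $m$ edges. Since all residual capacities have the form $a+b\lambda$ with $a,b\in\Z$ by the preceding observation, the arithmetic of the augmentations does not affect this combinatorial bound: shortest-path distances from $s$ in the residual graph remain monotonically non-decreasing whether capacities are numeric or symbolic, so the standard argument carries over verbatim.

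Next I will count symbolic comparisons within one iteration. The BFS that builds the shortest augmenting path must, for each examined residual edge, decide whether its capacity $a+b\lambda$ is positive, and the subsequent search for the bottleneck residual capacity on a path of length at most~$n$ requires comparing pairs of such expressions. Each such decision is delegated to Algorithm~\ref{alg: solve comparison}, giving $\O(n+m)$ comparisons per iteration. The cost of one call to Algorithm~\ref{alg: solve comparison} is dominated by its two classical maximum-flow computations on the perturbed networks $G_{\lambda_1}$ and $G_{\lambda_2}$; after the standard reduction that replaces the lower bounds on edges of~$R$ by auxiliary arcs the resulting network has $n$ nodes and $\O(n+m)$ arcs, so one call costs $\O(T_{MF}(n,n+m))$.

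Multiplying the three estimates yields
\begin{align*}
\O(nm)\cdot \O(n+m)\cdot \O(T_{MF}(n,n+m))=\O(nm(n+m)\cdot T_{MF}(n,n+m)),
\end{align*}
which, using the crude bound $n+m\le n^2$, collapses to the claimed $\O(n^3m\cdot T_{MF}(n,n+m))$. The step I expect to require the most care is verifying that a single call to Algorithm~\ref{alg: solve comparison} genuinely resolves the symbolic comparison: two arbitrary evaluations of $F$ near the unknown optimum need not by themselves reveal its location. This is precisely where part~(iv) of Observation~\ref{strucutralresults} is essential. Since consecutive breakpoints of $F$ are separated by more than $\frac{1}{m^2}$, the perturbations $\lambda\pm\frac{1}{2m^2}$ straddle at most the single breakpoint $\lambda$ itself; combined with the concavity of $F$ from~(i), the signs of $d(S_1)$ and $d(S_2)$ then determine whether the optimum lies strictly to the left of, strictly to the right of, or exactly at~$\lambda$. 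Once this correctness claim for Algorithm~\ref{alg: solve comparison} is in hand, the remainder is a routine product of the three factors above.
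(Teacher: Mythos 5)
Your proof follows essentially the same route as the paper's: bound the number of shortest-augmenting-path iterations of the symbolic Edmonds--Karp algorithm by $\O(nm)$, count the symbolic comparisons per iteration, charge each comparison the cost of the maximum-flow computations in Algorithm~\ref{alg: solve comparison}, and justify the comparison oracle via the breakpoint separation of Observation~\ref{strucutralresults}(iv) together with concavity. Your per-iteration comparison count of $\O(n+m)$ is in fact slightly tighter than the paper's $\O(n^2+m)$ (the paper charges $\O(n^2)$ comparisons for the bottleneck computation on a path of length at most $n-1$), but both estimates collapse to the same stated bound $\O(n^3m\cdot T_{MF}(n,n+m))$.
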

\begin{proof}
	\textbf{Correctness:} In order to resolve a comparison, we need to decide if the current $\lambda$ is to the left or to the right of the optimal $\lambda^*$. Since $\lambda$ might be a breakpoint, we instead check $\lambda_1 := \lambda -\frac{1}{2m^2}$ and $\lambda_2 := \lambda +\frac{1}{2m^2}$ and denote the corresponding cuts as $S_1, S_2$. If $d_R(S_i)$ is positive for $i=1, 2$, then $\lambda^* > \lambda$, since the flow value increases to the right of the current $\lambda$. In the same way, if $d_R(S_i)$ is negative for $i=1, 2$, then the flow value increases to the left of the current $\lambda$, i.e. $\lambda^*<\lambda$. Hence, since the objective function is concave, the only remaining case is $d_R(S_1) >0$ and $d_R(S_2)<0$. In this case, the flow value decreases in both direction, and we just have to find the unique breakpoint in the interval between $\lambda_1$ and $\lambda_2$. This can be done by computing the intersection of $g_{S_1}(\lambda_1)$ and $g_{S_2}(\lambda_2)$. In total, the comparison is correctly resolved. The proposed algorithm is the usual Edmonds-Karp algorithm except for the comparison which has to be made in order to compute the augmenting path. If the comparison is made correctly, this has no influence on the correctness of the Edmonds-Karp algorithm. However, this is the case, since the question 
	\begin{align*}
	a_1 + b_1\lambda < a_2 + b_2\lambda 
	\intertext{is equivalent to}
	\lambda < \frac{a_2-a_1}{b_1-b_2}. 
	\end{align*}
	Therefore the Edmonds-Karp algorithm with updated comparison resolving is also correct. 
	
	Now for the algorithm: We set the lower and upper capacities of all edges in~$R$ to $\lambda$ to ensure that they have the same flow. Then, by applying the Edmonds-Karp algorithm that runs with a symbolic parameter on $G'$, we find a feasible flow. With this, we have a starting flow for the almost equal maximum problem and we can use the Edmonds-Karp algorithm with updated comparison resolving to find an optimal flow since the residual network respects lower capacities.
	
	\textbf{Running time:} The Edmonds-Karp algorithm with updated comparison resolving has at most $\O(nm)$ iterations. In each of these, a shortest $s$-$t$-path~$P$ w.r.t. the number of edges is calculated, which can be done with Breadth-First-Search and therefore needs time $\O(n+m)$. Then the algorithm computes the minimum residual capacity on the edges of~$P$. Since $P$ has at most $n-1$~edges and each of the residual capacities may depend on the parametric value $\lambda^*$, there are at most $\O(n^2)$ comparisons which the algorithm has to resolve. Updating the residual network takes at most $2m$~comparisons of the form $a+b\lambda < u(r)$ and $a+b\lambda > 0$, thus at most $\O(m)$ comparisons have to be resolved.   
	For each comparison, a maximum flow and the corresponding cut have to be computed. Since all other operations are done in constant time, the running time of resolving one comparison is $\O(T_{MF}(n,m))$.
	Altogether, the Edmonds-Karp algorithm with updated comparison resolving runs in time $\O(nm\cdot(n^2+m)T_\text{Comparison}) \subseteq \O(n^3m\cdot T_{MF}(n,m))$. 
	
	So, in total the algorithm has a running time of $\O(m+(n^3mT_{MF}(n,n+m))+(n^3mT_{MF}(n,m))) \subseteq \O(n^3m\cdot T_{MF}(n,n+m))$.
\end{proof}
The number of comparisons can be decreased by exploiting implicit parallelism~\cite{megiddo1981applying}. 

When building the residual network, the algorithm has to solve $l(r)<f(r)$ and $f(r)<u(r)$ for every edge $r\in A$. Since $f(r) = a+b\lambda$, we have to solve up to $2m$ comparisons. Instead of this, we can first calculate all the values~$v$ for which we want to test $\lambda^* < v$ and sort them. This takes time $\O(m\log m)$ and afterwards we apply a binary search over these values. In total, we can compute the residual network in time $\O(m\log m \cdot T_\text{Comparison})$. With the same trick, the time needed to find the minimum residual capacity on a path~$P$ is $\O(n\log n \cdot T_\text{Comparison})$. This results in a running time of $\O(nm(n\log n + m\log m)T_{MF}(n,n+m))$.

To solve the integer version of the maximum AEMFP, we simply use the optimal value~$\lambda^*$ of the non-integer version and compute two maximum flows on the graphs~$G_{\lfloor\lambda^*\rfloor}$ and $G_{\lceil\lambda^*\rceil}$. By taking the $\text{argmax}\{val(f_{\lfloor\lambda^*\rfloor})$,  $val(f_{\lceil\lambda^*\rceil})\}$ we get the optimal parameter~$\lambda^*_{int}$ for the integer version.

In the general constant deviation AEMFP we consider more than one homologous edge set. By iteratively using the algorithm for the simple constant deviation AEMFP, we obtain a combinatorial algorithm for the general constant deviation AEMFP. We present the algorithm for the case of two homologous edge sets, but it can be generalized to an arbitrary number of homologous edge sets. The idea behind the algorithm is to fix some~$\lambda_1$ and then use the algorithm for the simple case to find the optimal corresponding~$\lambda_2$. Once we found $\lambda_2^*(\lambda_1)$, we check if $\lambda_1$ is to the left, right or equal to $\lambda_1^*$.  Note that the objective function is still a concave function in~$\lambda_1$ and~$\lambda_2$ since it is the sum of concave functions. Also, like in the simple case, all flow values and capacities both in the network $G$ and the residual network~$G_f$ during the algorithm are of the form
\begin{align*}
a + b\lambda_1 + c\lambda_2.
\end{align*}
Note that the running time of the algorithm for the general constant deviation AEMFP increases for every additional homologous edge set roughly by a factor of the running time of the algorithm for the simple constant deviation AEMFP. The next theorem summarizes the results above.

\begin{theorem}\label{thm: running time constant deviation k sets}
	Let $T_{mf}(n,m)$ denote the running time of a not specified maximum flow algorithm on a graph $G$ with $n$~nodes and
	$m$~edges. The AEMFP with~$k$ homologous sets can be solved in time
	\begin{align*}
	\O\left(n^km^k\log(\log(n))^k\cdot T_{mf}(n,n+m)\right)
	\end{align*}
	when we use the Edmonds-Karp algorithm as the underlying maximum flow algorithm. 
\end{theorem}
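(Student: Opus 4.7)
The plan is to prove the theorem by induction on the number $k$ of homologous edge sets, nesting the parametric search of Lemma~\ref{lem: running time simple constant deviation} one extra level at each step. For the base case $k=1$ I would invoke the refinement of that lemma with implicit parallelism, which achieves running time $\O(nm\log\log(n)\cdot T_{mf}(n,n+m))$. For the inductive step, assume an algorithm $\mathcal{A}_{k-1}$ for the $(k-1)$-set problem. To handle $k$ sets, I would promote $\lambda_k$ to a symbolic parameter, build the modified network in which the edges of the $k$-th homologous set have symbolic lower capacity $\lambda_k$ and symbolic upper capacity $\lambda_k+c_k$, and run $\mathcal{A}_{k-1}$ on it. Every flow value, residual capacity, and cut capacity generated during execution then stays of the form $a+\sum_{i=1}^k b_i\lambda_i$ with integer coefficients, so at the moment any comparison has to be resolved---with the inner parameters $\lambda_1,\ldots,\lambda_{k-1}$ already pinned down by the recursive call---it reduces to a single threshold test on $\lambda_k$ alone.

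Correctness of the outer search would follow because the objective $F(\lambda_1,\ldots,\lambda_k)=\min_{(S,T)} g_S(\lambda_1,\ldots,\lambda_k)$ is a minimum of affine functions and therefore concave; partially maximizing over $\lambda_1,\ldots,\lambda_{k-1}$ preserves concavity, so $\tilde F(\lambda_k):=\max_{\lambda_1,\ldots,\lambda_{k-1}} F$ is a concave piecewise linear function of $\lambda_k$. The breakpoint-separation bound of Observation~\ref{strucutralresults}~iv) extends once one checks that the coefficients appearing in each linear piece remain polynomially bounded integers; granted this, a comparison in the outer loop is resolved exactly as in Algorithm~\ref{alg: solve comparison} by probing at $\lambda_k\pm\tfrac{1}{2m^2}$ and reading off the slope on each side, with every call to a maximum-flow subroutine replaced by one invocation of $\mathcal{A}_{k-1}$.

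For the running time, the outer symbolic Edmonds--Karp performs $\O(nm\log\log(n))$ comparisons when implicit parallelism is used---the same count as in the single-parameter case---and each comparison costs one call to $\mathcal{A}_{k-1}$. This yields the recurrence $T_k=\O(nm\log\log(n))\cdot T_{k-1}$, which unrolls to the claimed $\O(n^km^k\log\log(n)^k\cdot T_{mf}(n,n+m))$. The main obstacle I anticipate is the bookkeeping for the symbolic parameters across levels of recursion: one must verify that exactly one parameter is free whenever a comparison is triggered (outer ones already substituted into the input network handed to the recursive call, inner ones pinned down by that call), and that the breakpoint-gap analysis of Observation~\ref{strucutralresults}~iv) continues to apply to the linear pieces of $\tilde F$ in the presence of the additional parameters. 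Once that accounting is in place the argument is essentially a $k$-fold repetition of the single-set technique.
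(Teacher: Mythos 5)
Your proposal matches the paper's approach: the paper gives no formal proof of this theorem (it "summarizes the results above"), and its informal argument is exactly your nested/iterative scheme --- fix the outer parameter symbolically, observe that all intermediate quantities stay of the form $a+\sum_i b_i\lambda_i$, use concavity of the min of affine cut capacities to resolve each comparison by probing, and multiply the per-level cost across the $k$ levels. The only caveat, which you share with the paper, is that the $\O(nm\log\log n)$ per-level comparison count is asserted rather than derived (the paper's own refinement of Lemma~\ref{lem: running time simple constant deviation} only reaches $\O(nm(n\log n+m\log m))$ comparisons), so the $\log\log$ factor in the base case rests on an unproved appeal to implicit parallelism in both treatments.
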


Note that the running time for an arbitrary number of homologous edge sets becomes exponential. Interestingly, using one of the known faster maximum flow algorithms instead of the Edmonds-Karp algorithm does not seem to yield an improved running time, since using an algorithm based on a push-relabel-technique yields a running time of $\O(n^3k\cdot T_{mf}(n,n+m))$ (see \cite{haese2019aef}).

\section{Problem Variants}\label{sec: aemfp variants}
In the section above we considered the case of a constant deviation of the flow value on edges within a homologous edge set. Now we allow the deviation function to be given as either a convex or a concave function.  
\subsection{The Convex Deviation Case}
If the deviation function is a convex function~$\Delta_{conv}\colon R \mapsto \R_{\geq 0}$, we get
the convex AEMFP. Note that this problem is neither a convex nor a concave program due to the constraint~\eqref{eq:2}. Hence, standard methods of convex optimization can not be applied.  In fact, the next theorem states that, unless $\P=\NP$, one cannot hope to find a polynomial time algorithm that solves the fractional variant of this problem:

\begin{theorem}\label{thm: convex deviation complexity}
	The AEMFP with a convex deviation function~$\Delta$ is $\NP$-complete, even if all deviation functions
	are given as $\Delta_R(x) = 2x^2 +1$ for all homologous sets~$R$, the homologous sets are disjoint, the capacities are integral, and the graph is bipartite.
\end{theorem}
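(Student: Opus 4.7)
The plan is to reduce from Exact-3-Set-Cover (X3C), reusing essentially the same graph construction as in the proof of Theorem~\ref{thm: integer constant deviation complexity} (nodes $S_i$ and $a_j$, source $s$, sink $t$, edges of capacity $5, 2, 1, 1$, homologous sets $R_i = \{(S_i, a_j) : a_j \in S_i\} \cup \{(S_i, t)\}$ and $R_0 = \{(a_j, t) : j = 1, \dots, q\}$), and to exploit strict convexity of $\Delta(x) = 2x^2 + 1$ to rule out fractional flows that would beat the target value $\tfrac{7q}{3}$.

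For the forward direction, I would reuse the integer flow from Theorem~\ref{thm: integer constant deviation complexity} verbatim. Each homologous set $R_i$ attains minimum $f_i \in \{0, 1\}$, with upper flow values from $\{1, 2\}$; since $\Delta(0) = 1$ and $\Delta(1) = 3$ dominate these upper values, the flow is feasible under the convex deviation, and its value is $\tfrac{7q}{3}$.

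For the converse, by flow conservation the objective equals $\sum_i f(s, S_i)$. Setting $x_i := \min_{r \in R_i} f(r)$, a short case analysis using the edge capacities and the homologous constraint $f(r) \le \Delta(x_i) = 2x_i^2 + 1$ yields the per-set bounds $f(s, S_i) \le 2x_i^2 + x_i + 3$ for $x_i \in [0, 1/\sqrt{2}]$ and $f(s, S_i) \le x_i + 4$ for $x_i \in [1/\sqrt{2}, 1]$, with equality $5$ only at $x_i = 1$. Because $x_i = 1$ forces every selection edge $(S_i, a_j)$ with $a_j \in S_i$ to carry flow exactly $1$, the capacity $f(a_j, t) \le 1$ forces the collection $\{S_i : x_i = 1\}$ to be pairwise disjoint, hence of size at most $q/3$, with equality if and only if the collection is an X3C cover.

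The main obstacle is the global counting argument for fractional $x_i \in (0, 1)$: one must combine the per-set bound with the budget $\sum_i x_i \le q/3$, which follows from $\sum_{j \in S_i} f(S_i, a_j) \ge 3 x_i$ together with $\sum_{i : a_j \in S_i} f(S_i, a_j) = f(a_j, t) \le 1$, and show that $\sum_i f(s, S_i)$ stays strictly below $\tfrac{7q}{3}$ unless exactly $q/3$ sets achieve $x_i = 1$. Strict convexity of $\Delta$, specifically that $\Delta(x)/x$ is strictly increasing on $(0, 1]$, is precisely what prevents fractional pooling from matching the efficiency of an integer allocation; this is the structural reason the convex case is $\NP$-hard while the fractional constant-deviation case lies in $\P$. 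Membership in $\NP$ is immediate from the polynomial-size integer witness used in the forward direction.
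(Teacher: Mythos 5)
Your forward direction is fine, but the converse is not merely ``the main obstacle'' you describe: the statement you would need there is false, so the reduction with threshold $\tfrac{7q}{3}$ cannot be repaired by a sharper counting argument. The relevant efficiency ratio is not $\Delta(x)/x$ but $\bigl(\min\{2,\Delta(x)\}-1\bigr)/x=\min\{1,2x^2\}/x$, the bonus gained per unit of element capacity consumed; this equals $1$ at both $x=1$ and $x=\tfrac12$ and peaks at $\sqrt2$ at $x=\tfrac{1}{\sqrt2}$, so fractional pooling is at least as efficient as an integral exact cover, not less. Concretely, take $q=6$ with $S_1=\{a_1,a_2,a_3\}$, $S_2=\{a_1,a_4,a_5\}$, $S_3=\{a_2,a_4,a_6\}$, $S_4=\{a_3,a_5,a_6\}$ and two padding copies $S_5=S_1$, $S_6=S_2$. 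Every pair of these sets intersects, so no exact cover exists. Set $f(S_i,a_j)=\tfrac12$ for $i\le4$ and $a_j\in S_i$; every element then receives exactly $1$, so $f(a_j,t)=1$ for all $j$; each $R_i$ with $i\le 4$ has minimum $x_i=\tfrac12$, hence $f(S_i,t)=\Delta(\tfrac12)=\tfrac32$ is feasible; finally $f(S_5,t)=f(S_6,t)=1$. This flow satisfies all capacity, conservation and homologous constraints and has value $4\cdot3+2\cdot1=14=\tfrac{7q}{3}$, so a no-instance attains your target. (An exact characterization at $\tfrac{7q}{3}$ also fails in the other direction: in a yes-instance, lowering one cover set to $x=\tfrac{1}{\sqrt2}$ and raising an overlapping set to $1-\tfrac{1}{\sqrt2}$ keeps the first bonus at $2$ while adding $2(1-\tfrac{1}{\sqrt2})^2>0$, so the fractional optimum strictly exceeds $\tfrac{7q}{3}$.)

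The paper does not commit to $\tfrac{7q}{3}$: it uses the same gadget but the threshold $\tfrac{8q}{3}$, and argues the converse not by a global budget count but by a local exchange argument, shifting $\epsilon$ units between partially saturated selection edges and using convexity of $\Delta$ to show such a flow cannot be optimal. Your per-set bounds and the budget $\sum_i x_i\le q/3$ are correct as far as they go, but the example above shows they cannot separate yes- from no-instances at $\tfrac{7q}{3}$; any fix along your lines must first pin down the true fractional optimum of the gadget, which is exactly where the difficulty of this theorem sits.
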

\begin{proof}
	Again we use a reduction from Exact-3-Cover. Given an X3C instance, we construct a network graph in the same way as in the proof of Theorem~\ref{thm: integer constant deviation complexity}. We now show that there exists an almost equal maximum flow with convex deviation functions and flow value $\frac{8q}{3}$ if and only if there exists a solution of X3C.
	
	Assume first that X3C has a solution $S'$. Then, we define an almost equal maximum flow in the same way as in the proof of Theorem \ref{thm: integer constant deviation complexity}.
	Suppose there is an almost equal maximum flow $x$ with flow value $val(x) > \frac{8q}{3}$. Since the capacity of edges $(a_j, t)$ is $1$, the summarized amount of flow on these edges is at most $q$. That means, the sum of flow on edges $(S_i,t)$ has to be greater than $\frac{5q}{3}+1$. Suppose a flow $x'$ uses more than $q$ edges of $(s,S_i)$. Then there must be at least two edges $(S_{i_1}, t)$, $(S_{i_2}, t)$ with $x'(S_{i_k},t)\in (1,3)$ for $k=1,2$. By the homologous edge constraint we know that at least one edge of the form $(S_{i_1}, a_{j})$ and one of the form $(S_{i_2}, a_{j})$ are not fully saturated, i.e. $x'(S_{i_k}, a_j) \in (0,1)$ for $k=1,2$. W.l.o.g. let $x'_1:= x'(S_{i_1}, a_j) \geq x'(S_{i_2},a_j)=: x'_2$. We show that this flow cannot yield the highest possible flow value, since shifting $\epsilon$ between these two edges yields in a higher flow value. Note that the sum $\sum_{k=1}^2 x'_k$ remains the same, only the value of the edges $(S_{i_k}, t)$ changes. In the following, we distinguish the following three cases.
	
	\textbf{Case 1:} The edges $(S_{i_k}, a_j)$ have both the strict smallest flow values in their homologous edge set. Then we shift an $\epsilon$ from $(S_{i_2}, a_j)$ to $(S_{i_1}, a_j)$. This yields
	\begin{alignat*}{3}
	2(x'_1 + \epsilon)^2 + 1 + 2(x'_2-\epsilon)^2 +1 & = 2(x'_1)^2 + 2\epsilon(x'_1 - x'_2) + 2\epsilon^2 + 2 + 2(x'_2)^2 \\
	& > 2(x'_1)^2 + 1 + 2(x'_2)^2 + 1.
	\end{alignat*}
	Hence, the sum of the flow value on the edges $(S_{i_k}, t)$ for $k=1,2$ is higher which is a contradiction to the maximality of $x'$. 
	
	\textbf{Case 2:} $x_1'$ is the smallest value among the flow value on edges of the corresponding homologous edge set and $x_2'$ is strictly larger than the smallest flow value of edges of the corresponding homologous edge set. Then, increasing $x_1'$ by $\epsilon \leq \min\{u_1 - x'_1, x_2' - x'_{2,\min}\}$ with $x'_{2,\min} = \min f_{R_2}$ yields
	\begin{alignat*}{3}
	2(x'_1 + \epsilon)^2 + 1 + 2(x'_{2,\min})^2 +1 & = 2(x_1')^2 + 1 + 2\epsilon x_1' + 2\epsilon^2 + 2(x'_{2,\min})^2 + 1 \\
	& > 2(x'_1)^2 + 1 + 2(x'_{2,\min})^2 + 1.
	\end{alignat*}
	Again, this is a contradiction to the maximality of $x'$.
	
	\textbf{Case 3:} $x'_2$ is the smallest flow value on edges of the related homologous edge set, $x_1'$ is strictly larger than the minimum. Then, shifting $\epsilon$ units of flow from $x'_1$ to $x_2'$ gives us
	\begin{alignat*}{3}
	2(x'_{1,\min})^2 + 1 + 2(x'_2 + \epsilon)^2 +1 & = 2(x_{1,\min}')^2 + 1 + 2(x'_2)^2 + 2\epsilon x_2' + 2\epsilon^2 + 1 \\
	& > 2(x'_{1,\min})^2 + 1 + 2(x'_2)^2 + 1.
	\end{alignat*}
	Also in this case the increasing resp. decreasing by $\epsilon$ yields in a higher flow value --- a contradiction to the maximality of $x'$.
	
	Conversely, assume that the almost equal maximum flow in $G$ has flow value $\frac{8q}{3}$. We need to show that this induces a solution to X3C and this can be done similar to the proof of Theorem \ref{thm: integer constant deviation complexity} under consideration of the three cases above. This settles the proof.
\end{proof}

\begin{theorem}\label{thm: convex deviation approx}
	Unless $\P=\NP$, there is no polynomial time constant factor approximation algorithm for the integer
	convex AEMFP.
\end{theorem}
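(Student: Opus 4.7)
The plan is to gap-amplify the X3C reduction of Theorem~\ref{thm: integer constant deviation complexity} using the bonus-edge gadget from the proof of Theorem~\ref{thm: integer constant deviation no 2 approx}, but with its deviation replaced by a convex deviation whose amplification factor is an arbitrary constant $c$. First, I would reuse the core network of Theorem~\ref{thm: integer constant deviation complexity}: it is bipartite, has integral capacities and disjoint homologous sets equipped with the (convex, because affine) deviation $x\mapsto x+1$, and its integer almost equal maximum $s$-$t$-flow is $V_Y:=\tfrac{7q}{3}$ when X3C is solvable and at most $V_Y-2$ otherwise. On top of this core I attach the amplifier of Theorem~\ref{thm: integer constant deviation no 2 approx}: two new nodes $t',t''$, an edge $(t,t')$ of capacity $V_Y$, $V_Y$ parallel edges $(t',t'')$ of capacity~$1$, and $k$ parallel bonus edges $(s,t'')$ of capacity $c+1$; the parallel edges and the bonus edges form a single homologous set $R_b$ equipped with the convex deviation $\Delta_b(x):=cx+1$. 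Here $c,k\geq 1$ are integer parameters to be fixed below.

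The gap analysis then mirrors that of Theorem~\ref{thm: integer constant deviation no 2 approx}. In the YES case, the core delivers $V_Y$ units to $t$, saturates each of the $V_Y$ parallel edges at $1$, pins $f_{R_b}$ down to $1$, and lets every bonus edge carry $\Delta_b(1)=c+1$ units, so the total $s$-$t''$-flow is $V_Y+k(c+1)$. In the NO case, at most $V_Y-2$ units reach $t$, so at least two of the parallel edges carry $0$, whence $f_{R_b}=0$ and every bonus edge is capped at $\Delta_b(0)=1$, giving a total flow of at most $V_Y-2+k$. The ratio $\frac{V_Y+k(c+1)}{V_Y-2+k}$ tends to $c+1$ as $k\to\infty$, so for any desired constant $\alpha\geq 1$ I would pick $c:=\lceil\alpha\rceil$ and then choose $k$ polynomially in $V_Y$ and $\alpha$ so that the ratio strictly exceeds $\alpha$. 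A polynomial-time $\alpha$-approximation would then be able to distinguish the YES and NO instances and hence decide X3C, implying $\P=\NP$.

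The main obstacle is pinning down the NO-case bound $V_Y-2+k$ rigorously, i.e.\ showing that no integer flow in the extended graph can exceed it regardless of how the flow is routed. The argument relies on two observations: every $s$-$t''$-path uses either the bottleneck edge $(t,t')$, whose flow is capped by the core maximum (at most $V_Y-2$ in the NO case), or a bonus edge; and the homologous constraint on $R_b$ collapses the bonus contribution to at most one unit per edge the moment any single parallel $(t',t'')$-edge carries zero flow, independently of any slack left in the core. Once these two points are secured, the rest of the proof is a routine variation of the Theorem~\ref{thm: integer constant deviation no 2 approx} argument.
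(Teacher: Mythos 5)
Your proof is correct and follows essentially the same route as the paper: both take the amplifier construction of Theorem~\ref{thm: integer constant deviation no 2 approx} and replace the deviation on the set $R_b$ by a convex (in fact linear/affine) function with a multiplicative factor, so that a feasible X3C solution unlocks a large bonus contribution while an infeasible one forces some parallel $(t',t'')$-edge to zero and collapses the bonus edges. The only difference is parametric: the paper picks $\Delta(x)=kx$, so $\Delta(0)=0$ drives the NO-case optimum all the way to $0$ and a single construction rules out every constant factor, whereas your $\Delta_b(x)=cx+1$ leaves the NO-case value positive and yields gap $c+1$, requiring you to re-tune $c$ for each target ratio $\alpha$ --- a valid, if slightly less economical, instance of the same idea.
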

\begin{proof}
	For this, we use the same reduction as in Theorem~\ref{thm: integer constant deviation no 2 approx}. For the concave deviation function of the set~$R_b$, we choose $\Delta_B^k: x \mapsto kx$. Then, one can see that the maximum flow value is $0$ if no solution of X3C exists and $\frac{7}{3}q + k$ if one exists. Thus, unless $\P=\NP$, no polynomial time constant factor approximation algorithm can exist.
\end{proof}

\subsection{The Concave Deviation Case}    
In contrast to the convex case, which is $\NP$-complete even for the fractional case, the concave case is polynomially solvable since in this case the AEMFP becomes a concave program.  

In the following we describe an algorithm for this variant using again the parametric search technique \cite{megiddo1978combinatorial,megiddo1981applying} and a refinement by Toledo \cite{toledo1993maximizing}. We restrict ourselves to the case of one homologous edge set~$R$, but the algorithm can be extended to an arbitrarily number of homologous edge sets according to~\cite{toledo1993maximizing}.  
As we have seen before, we can solve the AEFMP for fixed lower bounds $\lambda_i$ for each homologous edge set~$R_i$ by a maximum flow computation. Therefore, one can use the parametric search technique by Megiddo~\cite{megiddo1978combinatorial,megiddo1981applying} with symbolic input parameters $\lambda_i^*$ in order to find the (unknown) minimizer of $F$. Also we know that $F(\cdot)$ has no jumps between two breakpoints. Therefore, we restrict ourselves to an interval between two breakpoints and to find a maximizer $x^*_I$ for every such interval~$I$. In a second step, we evaluate all these local maximizers and find the global solution~$x^*$. The problem of finding a maximizer in each of the $m$~intervals can then be done simultaneously. 

By using a standard trick in network optimization, we can assume that our graph $G$ has lower bounds $0$ and we can apply the Edmonds-Karp algorithm to it. The maximum flow algorithm has to answer questions of the form $f_{\lambda^*}(e) = p_e(\lambda^*) \leq^? \min\{c(e), \Delta(\lambda^*)\}$. Such a comparison made by the algorithm is equivalent to the question which sign a polynomial concave function~$p$ at a given point~$x$ has.
\begin{observation}
	During the algorithm, all flow values and all residual capacities can be described by a polynomial~$p$ in $\lambda$ which is of degree at most $q$. 
\end{observation}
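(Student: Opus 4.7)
The plan is to prove the observation by induction on the number of iterations of the symbolic Edmonds--Karp algorithm, mirroring the argument from the constant deviation case but accounting for the higher-degree terms that $\Delta$ introduces. Since $\Delta$ is a polynomial of degree $q$ and $\lambda$ itself has degree $1 \le q$, both the symbolic lower capacity $\lambda$ and the symbolic upper capacity $\Delta(\lambda)$ on the edges in $R$ are polynomials in $\lambda$ of degree at most $q$.

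For the base case, I would invoke the standard zero-lower-bound transformation already mentioned in the paragraph preceding the observation, so that the initial residual capacities are either nonnegative integers on edges in $E\setminus R$, or of the form $\min\{u(e),\Delta(\lambda)\}-\lambda$ on edges in $R$. Restricting attention to one interval between consecutive breakpoints of $F(\lambda)$, which is what the surrounding algorithm does, fixes the branch of each $\min$, so every initial residual capacity is a single polynomial in $\lambda$ of degree at most $q$. The initial flow is identically zero and hence trivially of degree at most $q$.

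For the inductive step, I would assume that every current $f(e)$ and every current residual capacity $c_f(e)$ is a polynomial in $\lambda$ of degree at most $q$. The symbolic algorithm picks a shortest augmenting path $P$ and sets
\begin{equation*}
\Delta_{\mathrm{aug}} \;=\; \min_{e\in P} c_f(e),
\end{equation*}
using the comparison subroutine (an adapted version of Algorithm~\ref{alg: solve comparison}) to decide which polynomial attains the minimum at the current test value of $\lambda$. Hence $\Delta_{\mathrm{aug}}$ equals one specific $c_f(e^{\star})$ and is itself a polynomial of degree at most $q$. The subsequent updates $f(e)\leftarrow f(e)\pm\Delta_{\mathrm{aug}}$ and $c_f(e)\leftarrow c_f(e)\mp\Delta_{\mathrm{aug}}$ on the forward and reverse edges of $P$ are sums and differences of polynomials of degree at most $q$ and thus preserve the bound, closing the induction.

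The main, though mild, obstacle I anticipate is making precise that the various $\min$ expressions appearing in the algorithm (both the capacity expression $\min\{u(e),\Delta(\lambda)\}$ and the bottleneck $\min_{e\in P} c_f(e)$) collapse to a single polynomial at each iteration, rather than producing piecewise polynomial objects whose effective degree could accumulate. The resolution is that inside a fixed subinterval between two consecutive breakpoints of $F$, the ordering of every polynomial that ever appears during the run is constant, so each $\min$ has one fixed winner. Combined with the inductive step, this ensures that the degree bound $q$ is inherited directly from $\Delta$ and the input capacities and is never amplified.
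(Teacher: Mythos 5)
Your proposal is correct and follows essentially the same inductive argument as the paper: the initial residual capacities are polynomials of degree at most $q$, and each augmentation only adds or subtracts such polynomials, preserving the degree bound. Your additional care about why the $\min$ expressions collapse to a single polynomial within an interval between breakpoints is a detail the paper's (very terse) proof glosses over, but it does not change the approach.
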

\begin{proof}
	We start with the zero flow. In the first step, the residual capacities are either integer or $\Delta(\lambda)$ for some polynomial~$\Delta$ of bounded degree. Thus, the residual capacities can be written as a polynomial~$p(\lambda)$ of bounded degree.
	
	Whenever the algorithm augments the flow along a path, it adds two values of the form $p_1(\lambda), p_2(\lambda)$, where $p_1, p_2$ are again polynomials with degree at most $\delta$. This results in a flow value of the same form. 
\end{proof}
Since the sign of a polynomial is constant between two roots, it is sufficient to restrict ourselves to the roots $\{r_1,\dots,r_l\}$ of the polynomial~$p$. For every root, we evaluate $F$ and test if its evaluation is equal to $x^*$ or else if it is to its left or right.

We know that we can determine the relative position of a point $x$ to $x^*$ by evaluating $F$ at this point. In the case of a constant deviation function, we did this by computing the slope of $F$ at $x$. Here, instead of relying on the slope, we use the idea by Toledo \cite{toledo1993maximizing}. This process is presented in Figure~\ref{fig: toledo}. Evaluating $F$ at a point $x_1$ is a maximum flow computation in the graph $G_{x_1}$, i.e. the graph where the lower bound on edges of the homologous edge set is set to $x_1$. Now we distinguish two cases, either $x_1$ is to the left or to the right of the maximum $x^*$. First we check if we have already evaluated $F$ at a point $x_0$ with $F(x_1) \geq F(x_0)$. If this is the case, we know that the maximum $x^*$ lies in the direction of $x_0$. 

If we have not found a point with larger value in previous evaluations, we cannot resolve the comparison. The Case~0 of Figure~\ref{fig: toledo} shows this situation. Now, we copy the state of the algorithm and proceed in one copy with the presumption that $x_1$ lies to the left of $x^*$ and in the other copy with the presumption that $x_1$ lies to the right of $x^*$. These two cases are depicted as Case~1 (or Case~2 resp.) in Figure~\ref{fig: toledo}. So, on one side, we calculate a maximum flow for some $x_1<x_0$. If $\text{val}(f(x_1)) > \text{val}(f(x_0))$, we can resolve the comparison from above.
\begin{figure}
	\begin{minipage}{.3\textwidth}
		\begin{tikzpicture}
		\begin{scope}[every node/.style={circle,thick,draw}]
		\node (0) at (0,0.5) {$x_0$};
		\node (1) at (-1,-1) {$x_1$};
		\node (2) at (1,-1) {$x_2$};
		\end{scope}
		\node[text width=4cm] at (0,-4.5) {Case 0: The maximum could be either to the left or to the right of $x_0$.};
		\begin{scope}[
		every node/.style={fill=white,rectangle,sloped},
		every edge/.style={draw=black,very thick}]
		\path [-] (0) edge[color=black] (1);
		\path [-] (0) edge[color=black] (2);
		\end{scope}
		\end{tikzpicture}
	\end{minipage}
	\hfill
	\begin{minipage}{.3\textwidth}
		\begin{tikzpicture}
		\begin{scope}[every node/.style={circle,thick,draw}]
		\node (0) at (0,0.5) {$x_0$};
		\node (1) at (-1,-1) {$x_1$};
		\node (2) at (1,-1) {$x_2$};
		\node[font=\small] (3) at (-0.75,-2.5) {$ $};
		\end{scope}
		\node[text width=4cm] at (0,-4.5) {Case 1: If $F(x_1)>F(x_0)$ holds, only look to the right of $x_1$.};
		\begin{scope}[
		every node/.style={fill=white,rectangle,sloped},
		every edge/.style={draw=black,very thick}]
		\path [-] (0) edge[color=black] (1);
		\path [-] (0) edge[color=black] (2);
		\path [-] (1) edge[color=black] (3);
		\end{scope}
		\end{tikzpicture}
	\end{minipage}
	\hfill
	\begin{minipage}{.3\textwidth}
		\begin{tikzpicture}
		\begin{scope}[every node/.style={circle,thick,draw}]
		\node (0) at (0,0.5) {$x_0$};
		\node (1) at (-1,-1) {$x_1$};
		\node (2) at (1,-1) {$x_2$};
		\node[font=\small] (3) at (-0.75,-2.5) {$ $};
		\node[font=\small] (4) at (-1.25,-2.5) {$ $};
		\end{scope}
		\node[text width=4cm] at (0,-4.5) {Case 2: If $F(x_1)<F(x_0)$ holds, copy the state of the algorithm at $x_1$ and cancel the right side above.};
		\begin{scope}[
		every node/.style={fill=white,rectangle,sloped},
		every edge/.style={draw=black,very thick}]
		\path [-] (0) edge[color=black] (1);
		\path [dashed] (0) edge[color=black] (2);
		\path [-] (1) edge[color=black] (3);
		\path [-] (1) edge[color=black] (4);
		\end{scope}
		\end{tikzpicture}
	\end{minipage}
	\caption{The different possible outcomes of the comparison step.}\label{fig: toledo}
\end{figure}
During the whole process, we only have two copies running at any given time. These two copies can be run in parallel, since they only need to communicate right before the next branching step in order to know which branches of the tree to cut.  

This enables us to prove the following result:
\begin{theorem}
	The AEMFP with a piecewise polynomial concave deviation function~$\Delta$ with maximum degree~$q$ can be solved in polynomial time for one homologous edge set in time $\O(mq\cdot(nm\cdot(n+2m+n^2)(T_{MF}(n,n+m))))$	under the assumption that the roots of a polynomial~$p$ of maximum degree~$q$ can be computed in constant time~$\O(1)$.
\end{theorem}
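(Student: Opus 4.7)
The plan is to combine the correctness argument from the constant-deviation case with Toledo's resolution technique for polynomial comparisons. First I would argue that $F(\lambda)$, the max-flow value in $G_\lambda$, is concave and piecewise polynomial of degree at most $q$ on each of the $O(m)$ intervals between consecutive breakpoints: between breakpoints the combinatorial structure of a minimum cut does not change, so $F$ is a polynomial in the symbolic $\lambda$ of degree bounded by the degree of $\Delta$. Restricting attention to one such interval $I$ reduces the AEMFP to maximizing a concave polynomial, which can be solved by running symbolic Edmonds-Karp on $G_\lambda$ essentially as in Lemma~\ref{lem: running time simple constant deviation}; the global maximizer $x^*$ is then obtained by taking the best of the $O(m)$ local maximizers $x^*_I$.

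Next I would verify that every comparison inside symbolic Edmonds-Karp can be resolved correctly. By the preceding observation every flow value and residual capacity remains a polynomial~$p(\lambda)$ of degree at most~$q$, so each algorithmic comparison reduces to determining the sign of such a polynomial at the unknown $\lambda^*$. Since the sign is constant between consecutive real roots, and these $O(q)$ roots are computable in $O(1)$ by assumption, the comparison is decided as soon as we know which root interval contains~$\lambda^*$. This is the purpose of Toledo's refinement: for each candidate root $r$ one evaluates $F(r)$ by a single max flow computation in $G_r$ at cost $T_{MF}(n,n+m)$, and whenever the new evaluation dominates a previously stored one, the branch lying on the far side of that earlier point is pruned. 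This keeps the number of simultaneously active algorithm copies at two and contributes only a constant multiplicative overhead.

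Finally I would assemble the running time. Following the counting in Lemma~\ref{lem: running time simple constant deviation}, symbolic Edmonds-Karp performs $O(nm)$ iterations, each involving $O(n+2m+n^2)$ comparisons, split between BFS for the augmenting path, residual-network updates, and the parametric bottleneck computation along a path of length up to~$n$. Each individual comparison now incurs cost $O(q\cdot T_{MF}(n,n+m))$ because of the $O(q)$ roots to be tested, and the outer loop over the $O(m)$ breakpoint intervals contributes a further factor of~$m$. Multiplying yields exactly the claimed bound $\O(mq\cdot nm\cdot (n+2m+n^2)\cdot T_{MF}(n,n+m))$.

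The main obstacle I anticipate is arguing rigorously that Toledo's two-branch scheme never loses the correct root interval across the long sequence of dependent comparisons made by Edmonds-Karp: each comparison's outcome influences which polynomials the subsequent comparisons involve, so one must show that the pruning rule is consistent with the polynomials produced on the surviving branch and that no contradiction is ever forced between the two live copies. Once this combinatorial invariant is in place, the remainder of the proof is a direct transcription of the constant-deviation argument with polynomial sign tests replacing the linear ones.
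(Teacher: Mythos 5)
Your proposal follows essentially the same route as the paper: parametric search with symbolic Edmonds--Karp restricted to the intervals between breakpoints, comparisons resolved by computing the $\O(q)$ roots of the relevant polynomial and evaluating $F$ at each root via a max-flow computation, and Toledo's two-copy branching to handle unresolved comparisons, with the same operation counts assembled into the stated bound. The only cosmetic difference is bookkeeping --- the paper charges the factor $q$ to the number of intervals ($\O(mq)$ runs) while you charge it to the per-comparison root tests --- and the consistency concern you raise about the two-branch pruning is real but is not treated in any more detail in the paper's own (quite terse) proof.
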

\begin{proof}
	Since we have to compute the maximum for every interval, we have to run the algorithm $\O(mq)$~times. In each interval, we run the Edmonds-Karp algorithm which has at most $\O(nm)$ iterations. In each iteration, the algorithm needs to find a shortest path $P$ w.r.t. the number of edges, which can be done in $\O(n+m)$ time, for example with a Breadth-First-Search. To find the minimum residual capacity on $P$, the algorithm needs to do $\O(n^2)$ comparisons. For updating the residual network, again $\O(m)$ comparisons are needed.In order to resolve a comparison, first the roots of a polynomial~$p$ of bounded degree are computed, which can be done in constant time~$\O(1)$ by assumption. Evaluating $F$ at a root is a maximum flow computation in the graph~$G'$ where the lower bounds have been eliminated. Since this graph $G'$ has $n$ nodes and $n+m$ edges, we write $T_{MF}(n,n+m)$ for the time needed to compute a maximum flow in this graph. Overall, this yields in a running time of $\O(mq\cdot(nm\cdot(n+2m+n^2)(T_{MF}(n,n+m))))$.
\end{proof}

In the worst case, our algorithm yields a better running time than a direct implementation of the Megiddo-Toledo algorithm for maximizing non-linear concave function in~$k$ dimensions, which runs in $\O((T_{mf}(n,m))^{2^k})$ (\cite{toledo1993maximizing}).

The integral version of the concave AEMFP turns out to be still hard to solve and hard to approximate.

\begin{theorem}\label{thm: concave complexity}
	The concave integer AEMFP is $\NP$-complete.
\end{theorem}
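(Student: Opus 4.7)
The plan is to observe that the concave integer AEMFP contains the constant deviation integer AEMFP as a special case, so hardness is inherited for free from Theorem~\ref{thm: integer constant deviation complexity}. Specifically, the constant deviation function $\Delta(x) = x + 1$ used in that reduction is affine, hence in particular concave. Therefore the same X3C-reduction that establishes $\NP$-hardness of the integer constant deviation AEMFP is simultaneously a valid reduction into the integer concave AEMFP, with disjoint homologous sets, integral capacities, and a bipartite instance.

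For membership in~$\NP$, I would remark that a feasible integer flow~$f$ of value at least a given target can serve as a polynomial-size certificate: flow conservation, the capacity bounds $0\leq f_r\leq u_r$, and the homologous edge set condition $f_i\leq f(r)\leq \Delta_i(f_i)$ for $r\in R_{\Delta_i}$ can all be verified in polynomial time, provided the concave deviation functions~$\Delta_i$ are given in a representation that allows polynomial-time evaluation (which is assumed throughout the paper).

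Combining these two observations yields the theorem in essentially one line. The only potential subtlety is making sure that the reduction of Theorem~\ref{thm: integer constant deviation complexity} genuinely fits the concave setting without modification; since the argument there uses only the inequality $f_i\leq f(r)\leq f_i+1$, and this is exactly the homologous edge set constraint induced by the concave function $\Delta(x)=x+1$, nothing in the reduction needs to be changed. No separate obstacle arises, and no new combinatorial construction is required.
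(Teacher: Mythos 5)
Your proposal is correct and matches the paper's approach: the paper's own proof is just the remark that the argument of Theorem~\ref{thm: integer constant deviation complexity} carries over, and your observation that the affine function $\Delta(x)=x+1$ is itself concave makes the same point, if anything more cleanly, since the earlier instance is then literally an instance of the concave integer AEMFP. The $\NP$-membership remark is also fine.
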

\begin{proof}
	The proof is similar to the proof of Theorem~\ref{thm: integer constant deviation complexity}.
\end{proof}

\begin{theorem}\label{thm: concave approx}
	Moreover, unless $\P=\NP$, there is no polynomial time constant factor approximation algorithm for the integer concave AEMFP.
\end{theorem}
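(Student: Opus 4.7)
The plan is to adapt the construction from Theorem~\ref{thm: convex deviation approx} by observing that the deviation function used there, $\Delta_B^k\colon x\mapsto kx$, is linear and hence simultaneously convex \emph{and} concave. I would therefore reuse the same reduction from Exact-3-Set-Cover as in Theorem~\ref{thm: integer constant deviation no 2 approx}, but now interpret $\Delta_B^k(x)=kx$ as a concave deviation attached to the homologous set $R_b$ containing the bonus edges $(s,t'')$ together with the parallel $(t',t'')$-edges; every other homologous set keeps its concave constant deviation $x\mapsto x+1$, so the resulting instance is a legitimate concave AEMFP instance with integer data.

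Next I would verify the two extreme cases. If the X3C instance has a solution, the construction from Theorem~\ref{thm: integer constant deviation complexity} provides an integral flow of value $\tfrac{7q}{3}$ into $t$, so each of the $\tfrac{7q}{3}$ parallel $(t',t'')$-edges is saturated with one unit of flow. The minimum flow value on $R_b$ is then exactly $1$, and the homologous constraint $f(r)\leq \Delta_B^k(1)=k$ allows each bonus edge to carry its full capacity of $2$ as soon as $k\geq 2$, giving a total integer $s$-$t''$-flow of value at least $\tfrac{7q}{3}+2k$. Conversely, if X3C has no solution, the argument of Theorem~\ref{thm: integer constant deviation no 2 approx} forces some $(t',t'')$-edge to carry zero flow; since $\Delta_B^k(0)=0$, the homologous constraint on $R_b$ then collapses \emph{all} edges of $R_b$, in particular all bonus edges, to zero flow. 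Because every edge entering $t''$ lies in $R_b$, the maximum flow value must be $0$.

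Finally I would close the argument exactly as in Theorem~\ref{thm: convex deviation approx}: for any fixed constant $c\geq 1$, a hypothetical polynomial-time $c$-approximation algorithm would be forced to return $0$ on ``no''-instances (there is no positive feasible flow at all) and some strictly positive value on ``yes''-instances (where the optimum is $\tfrac{7q}{3}+2k>0$ for $k\geq 2$). The algorithm's output would therefore decide X3C in polynomial time, contradicting $\P\neq\NP$.

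The main conceptual step, and essentially the only obstacle, is the observation that the linear function $x\mapsto kx$ still qualifies as a concave deviation, so that the multiplicative gap engineered in the convex inapproximability proof transfers without modification to the concave setting. Once this is noticed, the only routine check is that $\Delta_B^k(0)=0$ propagates through the homologous constraint to annihilate the flow in the ``no''-case, which is immediate.
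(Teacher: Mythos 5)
Your proposal is correct and follows essentially the same route as the paper: the same extended X3C reduction from Theorem~\ref{thm: integer constant deviation no 2 approx} with the linear (hence concave) deviation $\Delta_B^k\colon x\mapsto kx$ on $R_b$, so that a ``no''-instance forces the minimum flow on $R_b$ to zero and hence annihilates all flow into $t''$, while a ``yes''-instance admits a strictly positive flow, yielding an unbounded gap. Your write-up actually supplies more detail than the paper's two-sentence proof (and your ``yes''-value $\tfrac{7q}{3}+2k$ is the more careful count, versus the paper's stated $\tfrac{7}{3}q+k$), but the underlying argument is identical.
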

\begin{proof}
	For this, we use the same reduction as in Theorem~\ref{thm: integer constant deviation no 2 approx}. For the concave deviation function of the set~$R_b$, we choose $\Delta_B^k: x \mapsto kx$. Then, one can see that the maximum flow value is $0$ if no solution of X3C exists and $\frac{7}{3}q + k$ otherwise. Thus, unless $\P=\NP$, no polynomial time constant factor approximation algorithm can exist.  
\end{proof}

	\section{Outlook}

 In this paper, we considered a novel class of flow problems, which we call almost equal flow problems. These are to be understood as a generalization of the equal flow problems. The motivation to study these problems comes from its application of finding an optimal load schedule between energy suppliers and energy consumers, where one tries to use flexibility to shift power consumption from peak times to times of lower grid utilization. This can be modeled in a time-expanded graph that shows the flow of electricity to a consumption unit. Such a consumption unit often has technical limitations that prevent the load from changing too much between successive points in time. These types of constraints can now be modeled by the almost equal property in such a time-expanded flow network.

For the Almost Equal Maximum Flow case we proved that the problem of finding such an optimal integer flow turns out to be hard to solve in general, regardless of whether the function is given by an affine transformation, a concave function or a convex function. Further, even finding an optimal maximum fractional flow for a convex deviation function is $\NP$-hard to find. Nevertheless, by using the parametric search technique by Megiddo we provide strongly polynomial algorithms if the number of homologous sets is given by a constant and the deviation function is either a constant or concave deviation function. As variants of the \emph{Almost Equal Flow Problems}, we discussed different deviation functions and obtained complexity results for these problem variants. 

Future research should be directed to the question if the network structure can be exploited in order to get faster algorithms for special graph classes. Furthermore, the obtained results can be extended to the \emph{Almost Equal Minimum Cost Flow Problem} in a similar way. 

\bibliography{aemfp}

\begin{thebibliography}{AOSZ99}

\bibitem[AKS88]{ali1988equal}
Agha~Iqbal Ali, Jeff Kennington, and Bala Shetty.
\newblock The equal flow problem.
\newblock {\em European Journal of Operational Research}, 36(1):107--115, 1988.

\bibitem[AOSZ99]{ahuja1999algorithms}
Ravindra~K. Ahuja, James~B. Orlin, Giovanni~M. Sechi, and Paola Zuddas.
\newblock Algorithms for the simple equal flow problem.
\newblock {\em Management Science}, 45(10):1440--1455, 1999.

\bibitem[EK72]{edmonds1972theoretical}
Jack Edmonds and Richard~M. Karp.
\newblock Theoretical improvements in algorithmic efficiency for network flow
  problems.
\newblock {\em Journal of the ACM (JACM)}, 19(2):248--264, 1972.

\bibitem[Hae19]{haese2019aef}
Rebekka Haese.
\newblock Almost equal flow problems.
\newblock {\em Master Thesis, TU Kaiserslautern}, 2019.

\bibitem[Meg78]{megiddo1978combinatorial}
Nimrod Megiddo.
\newblock Combinatorial optimization with rational objective functions.
\newblock In {\em Proceedings of the tenth annual ACM symposium on Theory of
  computing}, pages 1--12. ACM, 1978.

\bibitem[Meg81]{megiddo1981applying}
Nimrod Megiddo.
\newblock Applying parallel computation algorithms in the design of serial
  algorithms.
\newblock In {\em 22nd Annual Symposium on Foundations of Computer Science
  (sfcs 1981)}, pages 399--408. IEEE, 1981.

\bibitem[MS09]{meyers2009integer}
Carol~A. Meyers and Andreas~S. Schulz.
\newblock Integer equal flows.
\newblock {\em Operations Research Letters}, 37(4):245--249, 2009.

\bibitem[Tar86]{tardos1986strongly}
Eva Tardos.
\newblock A strongly polynomial algorithm to solve combinatorial linear
  programs.
\newblock {\em Operations Research}, 34(2):250--256, 1986.

\bibitem[Tol93]{toledo1993maximizing}
Sivan Toledo.
\newblock Maximizing non-linear concave functions in fixed dimension.
\newblock In {\em Complexity in numerical optimization}, pages 429--447. World
  Scientific, 1993.

\end{thebibliography}
\bibliographystyle{alpha}

\vfill
\pagebreak
\small
\vskip2mm plus 1fill
\noindent
Version \today{}
\bigbreak

\noindent
Till Heller\\
Department of Optimization\\
Fraunhofer ITWM, Kaiserslautern\\
Germany\\
ORCiD: 0000-0002-8227-9353\\

Rebekka Haese\\
Sven O. Krumke\\
Optimization Research Group, Department of Mathematics\\
Technische Universit\"at Kaiserslautern, Kaiserslautern\\
Germany\\

\end{document}